\newtheorem{theorem}{Theorem}[section]
\newtheorem{lemma}[theorem]{Lemma}
\newenvironment{proof}[1][Proof]{\begin{trivlist}
\item[\hskip \labelsep {\bfseries #1}]}{\end{trivlist}}
\newcommand{\qed}{\nobreak \ifvmode \relax \else
      \ifdim\lastskip<1.5em \hskip-\lastskip
      \hskip1.5em plus0em minus0.5em \fi \nobreak
      \vrule height0.75em width0.5em depth0.25em\fi}
\begin{document}

\preprint{}

\title{Symmetric Halves of the $\frac{8}{33}$-Probability that the Joint State of Two Quantum Bits is Disentangled}
\author{Paul B. Slater}%
\email{slater@kitp.ucsb.edu}
\affiliation{%
University of California, Santa Barbara, CA 93106-4030\\
}%
\date{\today}

\begin{abstract}
Compelling evidence--though yet no formal proof--has been adduced that the probability that a generic 
two-qubit state ($\rho$) is separable is 
$\frac{8}{33}$ (arXiv:1301.6617, arXiv:1109.2560, arXiv:0704.3723). Proceeding in related analytical frameworks, using a further determinantal moment formula  of C. Dunkl 
(Appendix), 
we reach the conclusion that one-half of this probability arises when the determinantal inequality $|\rho^{PT}|>|\rho|$, where $PT$ denotes the partial transpose, is satisfied, and, the other half,  when $|\rho|>|\rho^{PT}|$. These probabilities are taken with respect to the flat, Hilbert-Schmidt measure on the fifteen-dimensional convex set of $4 \times 4$ density matrices. We find fully parallel bisection/equipartition results for the previously adduced, as well,  two-"re[al]bit" and two-"quater[nionic]bit"separability probabilities of $\frac{29}{64}$ and $\frac{26}{323}$, respectively. The computational results reported lend strong support to those obtained earlier--including the "concise formula" that yields them--most conspicuously amongst those findings  being the 
$\frac{29}{64}, \frac{8}{33}$ and $\frac{26}{323}$ probabilities noted.
\end{abstract}

\pacs{Valid PACS 03.67.Mn, 02.30.Zz, 02.50.Cw, 05.30.Ch}
\keywords{$2 \times 2$ quantum systems, entanglement  probability distribution moments,
probability distribution reconstruction, Peres-Horodecki conditions,  partial transpose, determinant of partial transpose, two qubits, two rebits, Hilbert-Schmidt measure,  moments, separability probabilities,  determinantal moments, inverse problems, random matrix theory, generalized two-qubit systems, hypergeometric functions}

\maketitle
The problem of determining the probability that generic sets of bipartite/multipartite quantum states exhibit entanglement features of one form or another is clearly of intrinsic interest \cite{ZHSL,simon,BHN,sbz,ingemarkarol}. We have reported \cite{MomentBased,slaterJModPhys} major advances, in this regard, with respect to the "separability/disentanglement probability" of two-qubit states, endowed with the flat, Hilbert-Schmidt (HS) measure \cite{szHS,ingemarkarol}. (The alternative use of the theoretically-important Bures [minimal monotone] measure 
\cite{szBures,ingemarkarol} has subsequently been investigated \cite{BuresHilbert,Hybrid}, but much less progress has so far been achieved in that area.) In particular, a concise formula \cite[eqs. (1)-(3)]{slaterJModPhys} 
\begin{equation} \label{Hou1}
P(\alpha) =\Sigma_{i=0}^\infty f(\alpha+i),
\end{equation}
where
\begin{equation} \label{Hou2}
f(\alpha) = P(\alpha)-P(\alpha +1) = \frac{ q(\alpha) 2^{-4 \alpha -6} \Gamma{(3 \alpha +\frac{5}{2})} \Gamma{(5 \alpha +2})}{3 \Gamma{(\alpha +1)} \Gamma{(2 \alpha +3)} 
\Gamma{(5 \alpha +\frac{13}{2})}},
\end{equation}
and
\begin{equation} \label{Hou3}
q(\alpha) = 185000 \alpha ^5+779750 \alpha ^4+1289125 \alpha ^3+1042015 \alpha ^2+410694 \alpha +63000 = 
\end{equation}
\begin{displaymath}
\alpha  \bigg(5 \alpha  \Big(25 \alpha  \big(2 \alpha  (740 \alpha
   +3119)+10313\big)+208403\Big)+410694\bigg)+63000
\end{displaymath}
has been developed that yields for a given 
$\alpha$, where $\alpha$ is a random-matrix-Dyson-like-index \cite{dumitriu}, the corresponding separability probability $P(\alpha)$. 

The setting $\alpha=1$ pertains to the fifteen-dimensional convex set of (standard, complex-entries) two-qubit density matrices, and the formula yields (to arbitrarily high numerical precision) $P(1) =\frac{8}{33}$ (cf. \cite{slater833} \cite[eq. B7]{joynt}). It is interesting to note that in this standard case \cite{steve}, the probability seems of a somewhat simpler nature (smaller numerators and denominators) than the value $P(\frac{1}{2}) =\frac{29}{64}$ obtained for the nine-dimensional convex set of $4 \times 4$ (two-"rebit") density matrices with real entries \cite{carl}, or, the value  $P(2) = \frac{26}{323}$ derived for the 
27-dimensional convex set of $4 \times 4$ (two-"quaterbit") density matrices with quaternionic entries \cite{asher2,adler}.

These simple rational-valued separability probabilities and the formula above that yields them were obtained through a number of distinct steps of analysis. First, based on extensive computations, C. Dunkl was able to obtain the (yet formally unproven) determinantal-moment formula \cite[p. 30]{MomentBased} (cf. \cite[eq. (28)]{zozor})
\begin{gather*} \label{nequalzero}
\left\langle \left\vert \rho^{PT}\right\vert ^{n}\right\rangle =\frac
{n!\left(  \alpha+1\right)  _{n}\left(  2\alpha+1\right)  _{n}}{2^{6n}\left(
3\alpha+\frac{3}{2}\right)  _{n}\left(  6\alpha+\frac{5}{2}\right)  _{2n}}\\
+\frac{\left(  -2n-1-5\alpha\right)  _{n}\left(  \alpha\right)  _{n}\left(
\alpha+\frac{1}{2}\right)  _{n}}{2^{4n}\left(  3\alpha+\frac{3}{2}\right)
_{n}\left(  6\alpha+\frac{5}{2}\right)  _{2n}}~_{5}F_{4}\left(
\genfrac{}{}{0pt}{}{-\frac{n-2}{2},-\frac{n-1}{2},-n,\alpha+1,2\alpha
+1}{1-n,n+2+5\alpha,1-n-\alpha,\frac{1}{2}-n-\alpha}%
;1\right) .
\end{gather*}
(The brackets denote expectation with respect to Hilbert-Schmidt [Euclidean] measure, while $5F4$ indicates a generalized hypergeometric function. The partial transpose of $\rho$, obtained by transposing in place its four $2 \times 2$ blocks, is denoted by 
$\rho^{PT}$.)
7,501 of these moments ($n =0,1,\ldots 7500$) were employed as input to a Mathematica program of Provost \cite[pp. 19-20]{Provost}, implementing a Legendre-polynomial-based-moment-inversion routine. From the high-precision, exact-arithmetic results obtained, we were able to formulate highly convincing, well-fitting conjectures (including the above-mentioned $\frac{8}{33}$ for $\alpha =1$) as to  underlying simple rational-valued separability probabilities. Then, with the use of the Mathematica FindSequenceFunction command applied to the sequence ($\alpha =1, 2,\ldots,32$) of these conjectures, and simplifying manipulations applied to the lengthy Mathematica result generated, we derived a multi-term $7F6$ hypergeometric-based formula \cite[Fig. 3]{slaterJModPhys} (cf. \cite[eq. (11)]{karolkarol}), with argument $\frac{27}{64}= (\frac{3}{4})^3$, for the conjectured values. Then, Qing-Hu Hou applied \cite[Figs. 5, 6]{slaterJModPhys} a highly celebrated ("creative telescoping") algorithm of Zeilberger \cite{doron} to this expression to obtain the concise separability probability formula ((\ref{Hou1})-(\ref{Hou3})) for $P(\alpha)$ itself.

In the course of his work in obtaining the $5F4$-hypergeometric-based HS moment formula above--and a more general one still for $\left\langle \left\vert \rho^{PT}\right\vert
^{n}\left\vert \rho\right\vert ^{k}\right\rangle /\left\langle \left\vert
\rho\right\vert ^{k}\right\rangle$--Dunkl employed certain "utility functions", in particular \cite[p. 26]{MomentBased},
\begin{align*}
F_{2}\left(  n,k\right)   &  =\left\langle \left\vert \rho\right\vert
^{k}\left(  \left\vert \rho^{PT}\right\vert -\left\vert \rho\right\vert
\right)  ^{n}\right\rangle /\left\langle \left\vert \rho\right\vert
^{k}\right\rangle. 
\end{align*}
Recently, upon request, he was able to obtain  the explicit formula (Appendix)
\begin{align*}
F_{2}\left(  n,k\right)   &  =
\frac{\left(-\frac{1}{16}\right)^n (\alpha )_n \left(\alpha
   +\frac{1}{2}\right)_n (2 k+n+5 \alpha +2)_n}{\left(k+3 \alpha
   +\frac{3}{2}\right)_n \left(2 k+6 \alpha +\frac{5}{2}\right)_{2 n}}\\
& \times~_{4}F_{3}\left(
\genfrac{}{}{0pt}{}{-\frac{n}{2},\frac{1-n}{2},k+1+\alpha,k+1+2\alpha
}{1-n-\alpha,\frac{1}{2}-n-\alpha,n+2k+2+5\alpha}%
;1\right)  .
\end{align*}
We set $k=0$ in this formula, and once again applied the Legendre-polynomial-based-moment-inversion procedure of Provost \cite{Provost}, in the same manner as in our previous studies. 

It was first necessary to note, however, that rather than the 
variable 
range $-\frac{1}{16} \leq |\rho^{PT}| \leq \frac{1}{256}$ employed in these earlier studies, the appropriate range would now be $-\frac{1}{16} \leq (|\rho^{PT}|-|\rho|) \leq \frac{1}{432}$. (Note that $432 = 2^4 \cdot 3^3$, as well as, of course, $16=2^4$ and $256 =2^8$.). The lower bound of $-\frac{1}{16}$ is achieved by Bell states--one example being the density matrix with $\frac{1}{2}$ in its four corners, and zeros elsewhere--and the upper bound of $\frac{1}{432}$ by a $4 \times 4$ density matrix with diagonal entries $\{\frac{1}{6},
\frac{1}{3},\frac{1}{3},\frac{1}{6}\}$ and (1,4) and (4,1)-entries equal to 
$-\frac{1}{6}$, and zeros otherwise. (Note that if we interchange the roles of 
$|\rho^{PT}|$ and $|\rho|$ in this last example, a value of $-\frac{1}{432}$, the lower bound on the domain of separability, is obtained for the variable 
$(|\rho^{PT}|-|\rho|)$ of interest.) We crucially rely throughout these series of analyses upon the  proposition that 
$|\rho^{PT}|>0$ is both a necessary and sufficient condition for a two-qubit state to be separable \cite{augusiak,Demianowicz}. (We note that the partial transpose of a $4 \times 4$ density matrix $\rho$ can possess at most one negative eigenvalue, so the 
non-negativity of $|\rho^{PT}|$--the product of the four eigenvalues of $\rho^{PT}$--is 
tantamount to separability.)

For the subrange $[0, \frac{1}{432}]$ of $(|\rho^{PT}|-|\rho|)$, 
containing only separable states, employing 
$\alpha =1$ in the new hypergeometric-based  formula immediately above, we obtained,
based on 9,451 ($n =0, 1,\ldots 9,450$) moments, an estimate that was 0.50000004358 as large as $\frac{8}{33}$.
The parallel calculations in the two-rebit ($\alpha = \frac{1}{2}$) and two-"quaterbit" 
($\alpha = 2$) cases yielded counterpart estimates of 0.5000025687 and 0.5000000000177, respectively. (Differences in rates of convergence--much the same as observed in \cite{MomentBased}--can be attributed to the initial [zero{\it th}-order] assumption of the Legendre-polynomial-moment-inversion procedure that the probability distributions to be fitted are uniform in nature, rendering more sharply-peaked distributions more difficult to rapidly approximate well. {\it A fortiori}, for the $\alpha =4$ (conjecturally octonionic) value \cite[p. 9]{slaterJModPhys}, $P(4)= \frac{4482}{4091349}$, the computed value here was, $0.500000000000000015 \times P(4).$) These outcomes, certainly, help to strongly bolster the validity of the (yet formally unproven) concise formula of Hou ((\ref{Hou1})-(\ref{Hou3})), yielding the full generic Hilbert-Schmidt two-qubit separability probabilities $P(\alpha)$.

For the two-rebit, two-qubit and two-quaterbit probabilities over the extended interval
$[-\frac{1}{432}, \frac{1}{432}]$, containing all separable and now some entangled states (and thus providing upper bounds on the total separability probabilities), the estimates, again based on 9,451 moments were 0.78082617689, 0.69244685258 and 0.601390039979. However, we did not discern any particular underlying common structure in these values. As examples of  entangled states dense in $[-\frac{1}{432}, 0]$, Dunkl advanced a one-parameter 
($s \in [-\frac{1}{108}, 0]$)) family of density matrices with diagonal entries 
$\frac{1}{2} -\frac{s}{2}, \frac{s}{2}, \frac{s}{2},\frac{1}{2} -\frac{s}{2}$,
and (1,4)- and (4,1)-entries equal to $s$, and zeros elsewhere.

In Fig.~\ref{fig:BisectionPlot} we display an estimate based on the first 51 $(n =0,\ldots,50)$ moments of the probability distributions in question as a function of $\alpha$ over the subrange $[-\frac{1}{108},\frac{1}{432}]$ of the full range $[-\frac{1}{16},\frac{1}{432}]$ of 
$(|\rho^{PT}|-|\rho|)$. The distributions are more sharply peaked for smaller $\alpha$ (nearer to $\alpha = \frac{1}{2}$ in the plot), as the larger values of 
$P(\alpha)$ for smaller $\alpha$ would indicate.
\begin{figure}
\includegraphics{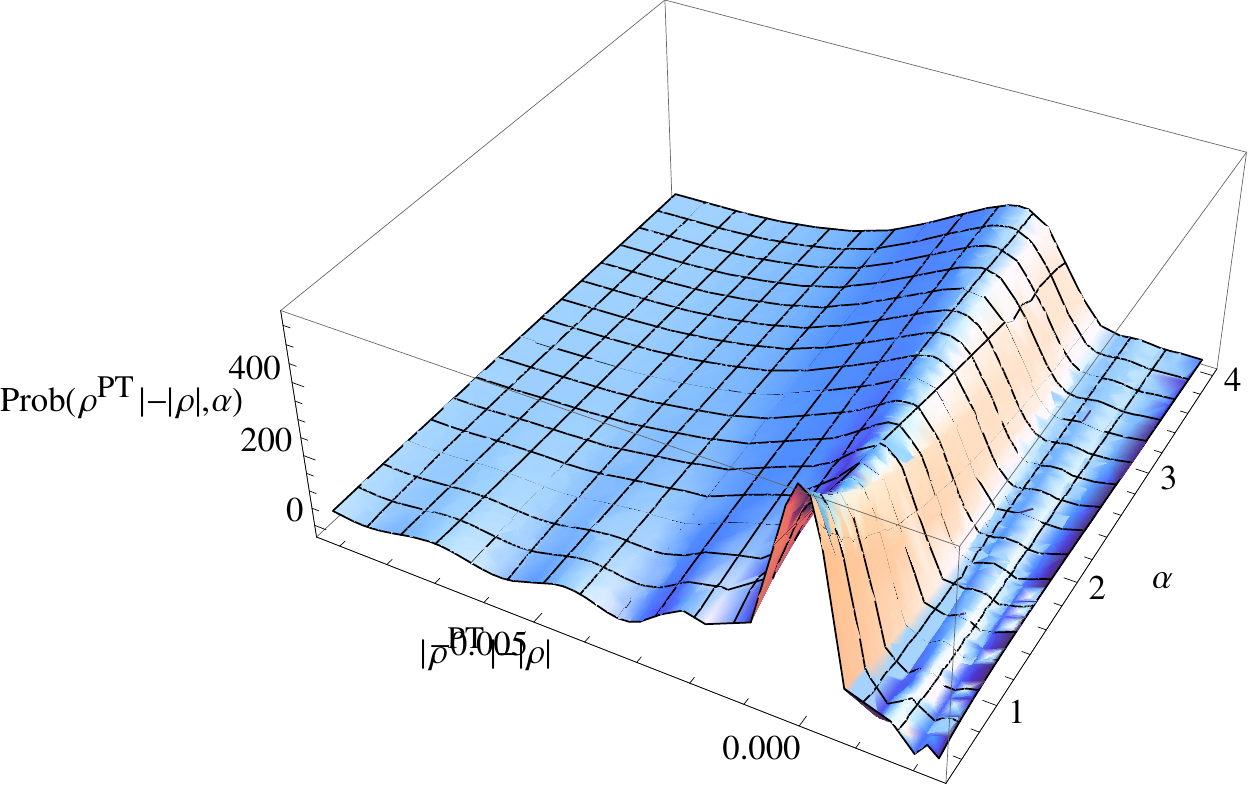}
\caption{\label{fig:BisectionPlot}Estimate based on first 51 moments of the probability distributions, as a function of the Dyson-index-like parameter $\alpha$, of the 
variable ($|\rho^{PT}|-|\rho|$)}
\end{figure}

Let us note that these "half-separability-probabilities" of $\frac{29}{128}, \frac{4}{33}, \frac{13}{323}$, indicated above, appear, by Hilbert-Schmidt-based analyses of Szarek, Bengtsson and {\.Z}yczkowski \cite{sbz}, to be exactly equal to the "full-separability-probabilities"
for the corresponding minimally-degenerate (boundary) generic two-rebit, two-qubit and 
two-quaterbit states (for which the determinant $|\rho|=0$). It would certainly be of interest to attempt to reproduce these probabilities--again employing $\alpha$ as a Dyson-index-like parameter--in a moment-based analysis (cf. \cite[App. D.7]{MomentBased}) analogous to that conducted above and previously in \cite{MomentBased,slaterJModPhys}.
(The range of $|\rho^{PT}|$ under the constraint $|\rho|=0$ would, first, have to be determined.)

These last three authors  had established that the set of positive-partial-transpose states for an arbitrary bipartite systems is "pyramid-decomposable" and hence, a body of constant height". They stated that "since our reasoning hinges directly on the Euclidean geometry, it does not allow one to predict any values of analogous ratios computed with respect to the Bures measure, nor other measures" \cite[p. L125]{sbz}.
Let us, then, pose the question of whether the "symmetric halves" finding elucidated above is itself particular to the Hilbert-Schmidt  (flat, Euclidean) metric or, 
contrastingly, does extend to the use of alternative metrics, such as the Bures (minimal monotone) metric \cite{szBures,ingemarkarol}? Also, in need of clarification is the issue of whether or not the Dyson-index {\it ansatz} of random matrix theory \cite{dumitriu}--apparently applicable in the Hilbert-Schmidt case, as our various results so far would indicate--extends to other measures, as well (cf. \cite{BuresHilbert,Hybrid}).
\section{Appendix (C. Dunkl)}
Let%
\[
g\left(  k,n\right)  :=\frac{\left(  k+1\right)  _{n}\left(  k+1+\alpha
\right)  _{n}\left(  k+1+2\alpha\right)  _{n}}{2^{6n}\left(  k+3\alpha
+\frac{3}{2}\right)  _{n}\left(  2k+6\alpha+\frac{5}{2}\right)  _{2n}},
\]
there is a multiplication relation:%
\[
g\left(  0,k\right)  g\left(  k,n\right)  =g\left(  0,k+n\right)  .
\]
Let%
\[
h\left(  k,n\right)  :=~_{5}F_{4}\left(
\genfrac{}{}{0pt}{}{-n,-k,\alpha,\alpha+\frac{1}{2},-2k-2n-1-5\alpha
}{-k-n-\alpha,-k-n-2\alpha,-\frac{k+n}{2},-\frac{k+n-1}{2}}%
;1\right)  .
\]
Then%
\begin{align*}
\left\langle \left\vert \rho\right\vert ^{k}\right\rangle  & =g\left(
0,k\right)  \\
\left\langle \left\vert \rho^{PT}\right\vert ^{n}\left\vert \rho\right\vert
^{k}\right\rangle /\left\langle \left\vert \rho\right\vert ^{k}\right\rangle
& =g\left(  k,n\right)  h\left(  k,n\right)  \\
\left\langle \left\vert \rho^{PT}\right\vert ^{n}\left\vert \rho\right\vert
^{k}\right\rangle  & =g\left(  0,k+n\right)  h\left(  k,n\right)  .
\end{align*}
Define%
\[
F_{2}\left(  n,k\right)  =\left\langle \left\vert \rho\right\vert ^{k}\left(
\left\vert \rho^{PT}\right\vert -\left\vert \rho\right\vert \right)
^{n}\right\rangle /\left\langle \left\vert \rho\right\vert ^{k}\right\rangle ,
\]
then%
\begin{align*}
F_{2}\left(  n,k\right)    & =\frac{1}{g\left(  0,k\right)  }\sum_{j=0}%
^{n}\binom{n}{j}\left(  -1\right)  ^{n-j}\left\langle \left\vert
\rho\right\vert ^{k+n-j}\left\vert \rho^{PT}\right\vert ^{j}\right\rangle \\
& =\frac{1}{g\left(  0,k\right)  }\sum_{j=0}^{n}\binom{n}{j}\left(  -1\right)
^{n-j}g\left(  0,k+n\right)  h\left(  k+n-j,j\right)  \\
& =g\left(  k,n\right)  \sum_{j=0}^{n}\binom{n}{j}\left(  -1\right)
^{n-j}h\left(  k+n-j,j\right)  .
\end{align*}
We will produce $F_{2}^{\prime}\left(  n,k\right)  :=\sum_{j=0}^{n}\binom
{n}{j}\left(  -1\right)  ^{n-j}h\left(  k+n-j,j\right)  $ as a single sum (so
that $F_{2}\left(  n,k\right)  =g\left(  k,n\right)  F_{2}^{\prime}\left(
n,k\right)  $).

\begin{lemma}
Let $n,m=0,1,2,\ldots$ and let $x$ be a variable, if $0\leq m\leq n$ then%
\[
\sum_{j=0}^{n}\frac{\left(  -n\right)  _{j}}{j!}\left(  -j\right)  _{m}\left(
x+j\right)  _{m}=\left(  -1\right)  ^{m}\frac{\left(  x\right)  _{2m}}{\left(
x\right)  _{n}}\left(  -n\right)  _{m}\left(  -m\right)  _{n-m},
\]
otherwise the sum is zero.
\end{lemma}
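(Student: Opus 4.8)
The plan is to read the sum as a finite alternating sum to which a classical terminating summation theorem applies. First I would use the two elementary Pochhammer identities $(-n)_j/j! = (-1)^j \binom{n}{j}$ and $(-j)_m = (-1)^m\, j!/(j-m)!$ (the latter vanishing for $0\le j<m$, which already forces the sum to be $0$ whenever $m>n$ and disposes of the ``otherwise'' clause). With these the summand becomes $(-1)^{j+m}\binom{n}{j}\frac{j!}{(j-m)!}(x+j)_m$, and the only surviving terms have $j\ge m$.

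The key reduction is the coefficient identity $\binom{n}{j}\frac{j!}{(j-m)!}=\frac{n!}{(n-m)!}\binom{n-m}{j-m}$, which lets me pull the constant $n!/(n-m)! = (-1)^m(-n)_m$ out of the sum and shift the index by $i=j-m$. This recasts the left-hand side as $\frac{n!}{(n-m)!}\sum_{i=0}^{n-m}(-1)^i\binom{n-m}{i}(x+m+i)_m$. Writing $(x+m+i)_m=(x+m)_m\,(x+2m)_i/(x+m)_i$ and $(-1)^i\binom{n-m}{i}=(m-n)_i/i!$ exhibits the inner sum as $(x+m)_m\,{}_2F_1(m-n,\,x+2m;\,x+m;\,1)$, a terminating Gauss series at argument $1$. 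Chu--Vandermonde then evaluates it in closed form as $(x+m)_m\,(-m)_{n-m}/(x+m)_{n-m}$.

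It remains only to reconcile this with the stated right-hand side. Substituting back gives $\frac{n!}{(n-m)!}(-m)_{n-m}\frac{(x+m)_m}{(x+m)_{n-m}}$; using $(-1)^m(-n)_m=n!/(n-m)!$ together with the factorizations $(x)_{2m}=(x)_m(x+m)_m$ and $(x)_n=(x)_m(x+m)_{n-m}$ collapses the $x$-dependent factor to $(x)_{2m}/(x)_n$ and reproduces the claimed expression exactly. The degenerate subcase $n>2m$ needs no separate treatment, since the factor $(-m)_{n-m}$ (equivalently the Chu--Vandermonde numerator) then vanishes on both sides. I expect no genuine obstacle here: the whole argument is a sequence of standard Pochhammer manipulations, and the one substantive step --- recognizing that after the index shift the remaining sum is precisely a Chu--Vandermonde ${}_2F_1$ --- is where all the content sits; everything else is sign- and factorial-bookkeeping, which I would verify on the small cases $n=2,\ m=1$ and $n=2m$ to guard against errors.
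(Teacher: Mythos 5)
Your proposal is correct and follows essentially the same route as the paper's proof: both restrict the sum to $j\ge m$, shift the index by $j=m+i$, recognize the resulting terminating series as a ${}_2F_1(m-n,\,x+2m;\,x+m;\,1)$, and close it with Chu--Vandermonde. The only difference is cosmetic bookkeeping (you work with binomial coefficients and factorials where the paper stays in Pochhammer notation and introduces the $(x)_j/(x)_j$ factor earlier), and your final reconciliation via $(x)_{2m}=(x)_m(x+m)_m$ and $(x)_n=(x)_m(x+m)_{n-m}$ matches the paper's last step exactly.
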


\begin{proof}
If $m>n$ then $\left(  -j\right)  _{m}=0$ for $0\leq j\leq n$. Suppose $0\leq
m\leq n$ then $\left(  -j\right)  _{m}=0$ for $0\leq j<m$ and the sum is over
$m\leq j\leq n$. Thus
\begin{align*}
\sum_{j=m}^{n}\frac{\left(  -n\right)  _{j}}{j!}\left(  -j\right)  _{m}\left(
x+j\right)  _{m}  & =\left(  -1\right)  ^{m}\sum_{j=m}^{n}\frac{\left(
-n\right)  _{j}~j!}{j!\left(  j-m\right)  !}\frac{\left(  x\right)
_{j}\left(  x+j\right)  _{m}}{\left(  x\right)  _{j}}\\
& =\left(  -1\right)  ^{m}\left(  x\right)  _{m}\sum_{j=m}^{n}\frac{\left(
-n\right)  _{j}\left(  x+m\right)  _{j}}{\left(  j-m\right)  !\left(
x\right)  _{j}}.
\end{align*}
Change the index of summation $j=m+i$ then the sum equals%
\begin{align*}
& \left(  -1\right)  ^{m}\frac{\left(  -n\right)  _{m}\left(  x\right)
_{m}\left(  x+m\right)  _{m}}{\left(  x\right)  _{m}}\sum_{i=0}^{n-m}%
\frac{\left(  m-n\right)  _{i}\left(  x+2m\right)  _{i}}{i!\left(  x+m\right)
_{i}}\\
& =\left(  -1\right)  ^{m}\frac{\left(  -n\right)  _{m}\left(  x\right)
_{2m}}{\left(  x\right)  _{m}}\frac{\left(  -m\right)  _{n-m}}{\left(
x+m\right)  _{n-m}}\\
& =\left(  -1\right)  ^{m}\left(  -n\right)  _{m}\left(  -m\right)
_{n-m}\frac{\left(  x\right)  _{2m}}{\left(  x\right)  _{n}},
\end{align*}
by the Chu-Vandermonde sum.
\end{proof}

Observe that $\left(  -m\right)  _{n-m}=0$ for $2m<n$. Then%
\begin{align*}
F_{2}^{\prime}\left(  n,k\right)    & =\left(  -1\right)  ^{n}\sum_{j=0}%
^{n}\frac{\left(  -n\right)  _{j}}{j!}\\
& \times\sum_{i=0}^{n}\frac{\left(  -j\right)  _{i}\left(  j-k-n\right)
_{i}\left(  \alpha\right)  _{i}\left(  \alpha+\frac{1}{2}\right)  _{i}\left(
-2k-2n-1-5\alpha\right)  _{i}}{i!\left(  -k-n-\alpha\right)  _{i}\left(
-k-n-2\alpha\right)  _{i}\left(  -\frac{k+n}{2}\right)  _{i}\left(
-\frac{k+n-1}{2}\right)  _{i}}\\
& =\left(  -1\right)  ^{n}\sum_{i=0}^{n}\frac{\left(  \alpha\right)
_{i}\left(  \alpha+\frac{1}{2}\right)  _{i}\left(  -2k-2n-1-5\alpha\right)
_{i}}{i!\left(  -k-n-\alpha\right)  _{i}\left(  -k-n-2\alpha\right)
_{i}\left(  -\frac{k+n}{2}\right)  _{i}\left(  -\frac{k+n-1}{2}\right)  _{i}%
}\\
& \times\sum_{j=0}^{n}\frac{\left(  -n\right)  _{j}}{j!}\left(  -j\right)
_{i}\left(  j-k-n\right)  _{i}.
\end{align*}
Apply the lemma to the $j$-sum with $x=-k-n$ and $m=i$ to obtain%
\[
\left(  -1\right)  ^{i}\left(  -n\right)  _{i}\left(  -i\right)  _{n-i}%
\frac{\left(  -n-k\right)  _{2i}}{\left(  -n-k\right)  _{n}}=\left(
-1\right)  ^{i}\frac{\left(  -n\right)  _{i}\left(  -i\right)  _{n-i}}{\left(
-n-k\right)  _{n}}2^{2i}\left(  -\frac{k+n}{2}\right)  _{i}\left(
-\frac{k+n-1}{2}\right)  _{i}%
\]
and thus
\[
F_{2}^{\prime}\left(  n,k\right)  =\frac{\left(  -1\right)  ^{n}}{\left(
-n-k\right)  _{n}}\sum_{i=0}^{n}\frac{\left(  -n\right)  _{i}\left(
-i\right)  _{n-i}\left(  \alpha\right)  _{i}\left(  \alpha+\frac{1}{2}\right)
_{i}\left(  -2k-2n-1-5\alpha\right)  _{i}}{i!\left(  -k-n-\alpha\right)
_{i}\left(  -k-n-2\alpha\right)  _{i}}\left(  -1\right)  ^{i}2^{2i}.
\]
This is not in hypergeometric form because of the term $\left(  -i\right)
_{n-i}$; also the summation extends over $\frac{n}{2}\leq i\leq n$. Change the
index $j=n-i$ then
\begin{align*}
\frac{\left(  -n\right)  _{i}}{i!}\left(  -i\right)  _{n-i}  & =\left(
-1\right)  ^{i}\frac{n!}{\left(  n-i\right)  !i!}\left(  -1\right)
^{n-i}\frac{i!}{\left(  2i-n\right)  !}=\left(  -1\right)  ^{n}\frac
{n!}{j!\left(  n-2j\right)  !}\\
& =\left(  -1\right)  ^{n}\frac{2^{2j}}{j!}\left(  -\frac{n}{2}\right)
_{j}\left(  \frac{1-n}{2}\right)  _{j}%
\end{align*}
and the reversal formula is%
\begin{align*}
\left(  x\right)  _{i}  & =\left(  x\right)  _{n-j}=\frac{\left(  x\right)
_{n-j}\left(  x+n-j\right)  _{j}}{\left(  x+n-j\right)  _{j}}\\
& =\left(  -1\right)  ^{j}\frac{\left(  x\right)  _{n}}{\left(  1-n-x\right)
_{j}}.
\end{align*}
Thus%
\begin{align*}
F_{2}^{\prime}\left(  n,k\right)    & =\frac{\left(  -1\right)  ^{n}\left(
\alpha\right)  _{n}\left(  \alpha+\frac{1}{2}\right)  _{n}\left(
-2k-2n-2k-1-5\alpha\right)  _{n}}{\left(  -n-k\right)  _{n}\left(
-k-n-\alpha\right)  _{n}\left(  -k-n-2\alpha\right)  _{n}}\\
& \times\sum_{j=0}\frac{\left(  -\frac{n}{2}\right)  _{j}\left(  \frac{1-n}%
{2}\right)  _{j}\left(  k+1+\alpha\right)  _{j}\left(  k+1+2\alpha\right)
_{j}}{j!\left(  1-n-\alpha\right)  _{j}\left(  \frac{1}{2}-n-\alpha\right)
_{j}\left(  n+2k+2+5\alpha\right)  _{j}}2^{2j+2n-2j}\\
& =\left(  -1\right)  ^{n}2^{2n}\frac{\left(  \alpha\right)  _{n}\left(
\alpha+\frac{1}{2}\right)  _{n}\left(  n+2k+2+5\alpha\right)  _{n}}{\left(
k+1\right)  _{n}\left(  k+1+\alpha\right)  _{n}\left(  k+1+2\alpha\right)
_{n}}\\
& \times~_{4}F_{3}\left(
\genfrac{}{}{0pt}{}{-\frac{n}{2},\frac{1-n}{2},k+1+\alpha,k+1+2\alpha
}{1-n-\alpha,\frac{1}{2}-n-\alpha,n+2k+2+5\alpha}%
;1\right)  ;
\end{align*}
a balanced sum.

The formula was tested for $F_{2}\left(  2,k\right)  $, also directly verified
for $n=3$, arbitrary $\alpha$.

Combining the front factors in $F_{2}\left(  n,k\right)  $ (from $g\left(
k,n\right)  $) we obtain%
\[
\left(  -1\right)  ^{n}\frac{\left(  \alpha\right)  _{n}\left(  \alpha
+\frac{1}{2}\right)  _{n}\left(  n+2k+2+5\alpha\right)  _{n}}{2^{4n}\left(
k+3\alpha+\frac{3}{2}\right)  _{n}\left(  2k+6\alpha+\frac{5}{2}\right)
_{2n}}.
\]

\begin{acknowledgments}
I would like to express appreciation to the Kavli Institute for Theoretical
Physics (KITP) for computational support in this research. C. Dunkl supplied the crucial $4F3$-hypergeometric-based formula employed, as well as much useful advice.
\end{acknowledgments}

\bibliography{Dissection}

\end{document}